
\documentclass{lmcs}

\usepackage{hyperref}

\def\eg{{\em e.g.}}

\def\ie{{\em i.e.}}

\usepackage{amssymb}
\usepackage{csquotes}
\usepackage{listings}

\bibliographystyle{alphaurl}

\newcommand*{\casesif}{&\text{if }}
\newcommand*{\casesowise}{&\text{otherwise}}
\newcommand{\Set}[1]{\lbrace\,#1\,\rbrace}
\newcommand*{\function}[3]{#1 \colon #2 \rightarrow #3}
\newcommand*{\catcompose}{\mathbin{;}} 
\newcommand*{\andalso}{\wedge}
\newcommand*{\orelse}{\vee}
\DeclareMathOperator{\suchthat}{\mid}
\newcommand{\tuple}[1]{(#1)}
\newcommand*{\Naturals}{\omega}

\newcommand*{\interpretation}[1]{\lVert #1\rVert}

\newcommand*{\true}{\mathrm{t}}
\newcommand*{\false}{\mathrm{f}}
\newcommand*{\meaningless}{\mathord{?}}
\newcommand*{\emptystring}{\mathtt{e}}

\newcommand*{\bottom}{\mathord{\perp}}

\newcommand*{\truth}[3]{u^{#1, #2}_{#3}}
\newcommand*{\program}[3]{p^{#1, #2}_{#3}}
\newcommand*{\programs}[3]{P^{#1, #2}_{#3}}
\newcommand*{\validPrograms}[3]{V^{#1, #2}_{#3}}
\newcommand*{\nonProgram}[3]{\bottom^{#1, #2}_{#3}}
\newcommand*{\interpretationSet}[3]{I^{#1, #2}_{#3}}
\newcommand*{\simulation}[3]{\simulatedBy^{#1, #2, #3}}
\newcommand*{\semantics}[4]{\interpretation{#1}^{#2, #3}_{#4}}
\newcommand*{\hypothesis}[3]{h^{#1, #2}_{#3}}
\newcommand*{\action}[3]{a^{#1, #2}_{#3}}
\newcommand*{\evidence}[3]{e^{#1, #2}_{#3}}
\newcommand*{\lang}[3]{L^{#1, #2}_{#3}}
\newcommand*{\theory}[4]{T^{#1, #2, #3}_{#4}}

\newcommand*{\representation}[6]{f^{#1\rightarrow #2, #3\rightarrow#4}_{#5\rightarrow#6}}

\newcommand*{\implication}[5]{#3, #1 \models^{#4}_{#5} #2}

\DeclareMathOperator*{\limBold}{\mathbf{lim}}
\newcommand*{\trueInLimit}[2]{\mathbf{\limBold}_{#2\rightarrow \infty} #1}

\DeclareMathOperator*{\correctOp}{Correct}
\newcommand*{\isCorrect}[3]{\correctOp^{#1, #2}_{#3}}
\DeclareMathOperator*{\validOp}{Valid}
\newcommand*{\isValid}[3]{\validOp^{#1, #2}_{#3}}
\DeclareMathOperator*{\completeOp}{Complete}
\newcommand*{\isComplete}[3]{\completeOp^{#1, #2}_{#3}}
\DeclareMathOperator*{\matureOp}{Mature}
\newcommand*{\isMature}[3]{\matureOp^{#1, #2}_{#3}}
\newcommand*{\achievesMaturity}[3]{\trueInLimit{\isMature{#1}{#2}{#3}}{#3}}

\newcommand*{\representable}[6]{\left\langle #1\right\rangle_{#3}^{#5} \Rightarrow_\mathrm{repr} \left\langle #2 \right\rangle_{#4}^{#6} }

\newcommand*{\simulatedBy}{\preccurlyeq_{\text{sim}}}
\newcommand*{\bisimilar}{\approx_{\text{sim}}}

\keywords{logic of software development, programs, specifications, translatability, methodology of software development, possible worlds, program generators}

\begin{document}
\title[Software development by multiple agents following a common specification]{A logical framework to model software development by multiple agents following a common specification}

\thanks{The first author has been funded by the Special Account for Research Funding (E.L.K.E.) of the National Technical University of Athens. The second author has been supported by the PEVE2021 action of the Special Account for Research Funding (E.L.K.E.) of the National Technical University of Athens. This work is dedicated to the late Aristidis Arageorgis}

\author[G. V. Pitsiladis]{Georgios V. Pitsiladis\lmcsorcid{0000-0003-2279-7738}}
\author[P. S. Stefaneas]{Petros S. Stefaneas\lmcsorcid{0000-0002-2096-9914}}
\address{School of Applied Mathematical and Physical Science, National Technical University of Athens, Greece}
\email{gpitsiladis@mail.ntua.gr, petros@math.ntua.gr} 
 
\begin{abstract}
	\noindent In this paper, we address program development by multiple different programmers (or programming teams), each working in different settings (programming languages or reasoning frameworks), but following a common specification; in particular, we examine at an abstract level the problem of translatability between their produced programs.
	To this end, after consideration of some philosophical issues regarding program development, including its similarities and dissimilarities with  scientific theorising, we extend a logical framework built to describe scientific theorising in relativist settings: our extensions add the ability of reasoning about programs, the iterative process of their generation, and their specifications.
	We are thus able to define a notion of translation between the outputs of program generators and prove that there is a (trivial) such translation when two program generators follow the same specification reliably (in a specific sense of reliability).
\end{abstract}

\maketitle

\section{Introduction}

In~\cite{arageorgis_relativism_2017}, A. Arageorgis, building on the work of~\cite{kelly_reliability_1992}, constructs a formal framework describing scientific theorising in the presence of many variants of relativism: truth, meaning, logical framework, and evidence are expected to depend on the actions, conjectures, and conceptual choices of each scientist (or research programme) at any given time. Arageorgis proves that, if two scientists (or research programmes) start with common background knowledge and work in a way that asymptotically brings each of them close to their respective truth about the world, then there will asymptotically be a (trivial) translation between their resulting theories.

It has been argued~\cite{plato_philosophy_2021} that software development has an empirical character, albeit with some important differences compared to the empirical character of natural science. Moreover, computer systems can be considered as technical artefacts~\cite{turner_computational_2018}; in that sense, they require the use of some background scientific knowledge in order to be built~\cite{arageorgis_demarcating_1989}. Both these considerations lead us to believe that the framework defined in~\cite{arageorgis_relativism_2017} can be (carefully) modified to reason fruitfully about program development; in particular, given its translatability result and, more generally, its strength in modelling situations where multiple agents work independently but share some background assumptions, we feel that it can shed some light on similar concerns in computer science.

With the above in mind, we extend the framework of~\cite{arageorgis_relativism_2017} to describe software development. We reinterpret scientists (or research programmes) as software developers (or teams). The notions of relativised truth, meaning, logical framework, and evidence remain present in our generalisation.  We add that each developer (or team) produces a piece of code (a technical artefact) at every time instant; hence, the relativised setting is extended with program semantics, allowing different programmers to work in different programming languages, or even programming paradigms, resulting in different program semantics.

It is then possible to define a notion of translatability between the outputs of program generators. 
As in~\cite{arageorgis_relativism_2017}, we manage to obtain a proof that if two programmers (or teams) start with common specification and work in a way that asymptotically makes each one write correct programs relative to their shared specification, then, there will asymptotically be a (trivial) translation between their resulting programs and theories.

The rest of the paper is organised as follows: Section~\ref{section:philosophy-of-cs} contains philosophical discussion on the ontology of programs and specifications and on the methodology of software development; the purpose of this discussion is threefold: (a) to argue on the feasibility of porting results form the philosophy of natural science to reason about computer science, (b) to support the conceptual choices made while adapting~\cite{arageorgis_relativism_2017} and developing our framework, and (c) to provide an overview of the aspects of the development of computational systems, especially in relation to programs and specifications, from which the aspects that can or cannot be handled by our framework can be inferred.
Section~\ref{section:logic-of-program-development} contains a detailed presentation of our framework.
Finally, Section~\ref{section:conclusion} contains some concluding remarks.

\section{Philosophical considerations on program development} \label{section:philosophy-of-cs}

The development of computational systems has been characterised as an empirical discipline~\cite[Section~8.3]{plato_philosophy_2021} and parallels can be drawn between (parts of) its methodology and elements of the scientific method~\cite{angius_problem_2014}. This is due to the need for verifying whether programs adhere to their specification \textemdash a need that quickly becomes all the more urgent as programs become more complex\textemdash, which is pursued by forming (maybe formal) theories of the behaviour of the program and by observing executions of the program with carefully selected input; both these aspects bear resemblances with scientific theorising and experimentation.

Moreover, as noted by~\cite{angius_problem_2014}, creating complete computational systems definitely requires solving several problems pertaining to an array of scientific disciplines: it is a \emph{scientifically attested technology} in the sense of~\cite{arageorgis_demarcating_1989}.

However, arguably, the analogy between software engineering and experimental science has a number of peculiarities; \cite{plato_philosophy_2021} mentions some of them, the most important being the difference between specifications and scientific theories that we discuss in Section~\ref{section:philosophy-specifications} below.

Thus, adapting the framework of~\cite{arageorgis_relativism_2017}, designed for the description of scientific theorising in the \enquote{hard sciences}, to obtain a framework that describes program development might be sensible, but, in any case, it must be carried out with care, in order to accommodate the differences between the two kinds of disciplines.

Our treatment of software development will be based on a computer science perspective, which, as argued by~\cite{gruner_problems_2011}, is different from (although related to) a software engineering perspective. For example, we will ignore aspects of software such as its cost, legal considerations (at least those not directly expressed in the specification), and project management, \ie,  those aspects  termed the \emph{socio-technical modality} by~\cite{de_mol_what_2018}. We will also ignore aesthetic qualities of programs, such as form and legibility; moreover, we will not be concerned with simplicity (although, according to~\cite[Section~2.3]{turner_computational_2018}, it contributes to correctness). 

\subsection{Levels of abstraction}
In \cite{primiero_information_2016}, building on the method of abstraction~\cite{floridi_method_2016}, the basic levels of abstraction (LoAs) in work when creating computational systems are identified and discussed. They range from structured physical data (for example, bits stored in a hard drive) to intentional information content (this corresponds to \emph{intention} of~\cite{turner_computational_2020} or, maybe more loosely, to \emph{requirements} of~\cite{zave_four_1997}; for example, what kind of actions a travel agent expects a booking platform to perform).

A full account of development of computational systems as complete entities should take into account all the LoAs that are employed when building such systems.
In this work, we will not explore the full LoA hierarchy; for example, we will not consider the actual circuits running a program (also known as the \emph{physical modality} of~\cite{de_mol_what_2018}). We will focus on programs and on their specifications; in this context, a program is instructional information in some programming language, while a specification is abstract (but still formal) information which is meant to capture the intention of the computational system stakeholders.

As stressed by~\cite{floridi_method_2016}, LoAs require an analysis exhibiting \emph{relationism}: pluralistic but without endorsing relativism. One of the goals of our work is to express situations where two programmers (or programming teams) share a part of the LoA hierarchy (in particular, they share a common specification) but might be working with different elements in the lower LoAs, hence we introduce a kind of \enquote{horizontal} relationism among the different agents, in addition to the \enquote{vertical} relationism among the LoAs mentioned above. The framework of~\cite{arageorgis_relativism_2017} seems fitting for that goal, since it was specifically designed for a pluralistic analysis.

\subsection{Programs: syntax and semantics}

Programs are (sequences of) instructions given in some programming language. As such, they are expressed as strings in some well-defined alphabet and they must conform to the (formal) syntactic rules of their language.

In~\cite{primiero_information_2016}, it is stressed that programs have no alethic values: their meaning ultimately lies on their ability to control the operation of the physical components of a machine, although this is mediated by the (abstract) semantics of the programming language. Similarly, there are two main ways to study the semantics of programming languages and programs: operational semantics, which focuses on how each step of the program operates, and denotational semantics, which focuses on the program as a whole; for a comparison and other notions of semantics, see~\cite{turner_computational_2018}.

While not having alethic values themselves, programs, being technical artefacts, have both functional and structural properties~\cite[Section~3]{angius_logic_2018}: in short, functional properties describe \emph{what functionalities} a program implements, while structural properties describe \emph{how} it operates. Reasoning on such properties is therefore possible, for descriptive or for normative purposes.
When programs are too complex, reasoning on their functional properties might be accomplished by constructing models to act as their proxies, in order to abstract away some of the complexity \cite[Section \emph{Model Construction}]{turner_specification_2011}.

\subsection{Specifications and verification}\label{section:philosophy-specifications}

Specifications\footnote{As discussed in~\cite[Section~6.1]{angius_logic_2018}, there are two notions of \enquote{specifications} in the literature: \emph{property} specifications (called \emph{functional} specifications in~\cite{turner_specification_2011}) and \emph{system} specifications. The former describe what functional properties a program \emph{should} have, while the latter describe what properties an \emph{already constructed} program has. The notion of \enquote{specifications} in this paper refers to the first kind.} define how a program should function: they describe the functional properties that produced programs should possess. Their articulation ranges from natural language to specialised formal languages. Similarly (but not necessarily in parallel), their level of detail and their level of formality can vary.

What is common to all specifications is their relation to programs: they form a \emph{dominant/submissive} pair, in the sense of~\cite{turner_computational_2020}. In the words of~\cite{turner_specification_2011}, \enquote{it is the act of taking a definition to have normative force over the construction of an artefact that turns a mere definition into a specification}, or, more plainly, \enquote{something is a specification when it is given correctness jurisdiction over an artefact}. This characteristic does not stem from the form or the extensional content of the specifications, but from the intention of their use as such~\cite{turner_computational_2020}.

This is a key difference between specifications and scientific theories. While it is in general true that specifications describe programs and scientific theories describe the world, they \enquote{describe} in different senses. The latter have explanatory and predictive character, while the former are normative (any predictive or explanatory characteristics a specification might have are incidental~\cite[Section~\emph{Specification and Theory}]{turner_specification_2011}). Accordingly, when there is a mismatch between a scientific theory and the world, then the theory must be revised; on the contrary, when there is a mismatch between a specification and a program, then the program must be revised. \label{text:normative-force-opposed-to-science}

Notice that, as stressed by~\cite{turner_specification_2011}, the relationship between programs and their specifications is more complicated in practice. On the one hand, specifications are not always correct themselves or may even be unrealistic; these and other reasons imply that there are cases where a program/specification misalignment causes the specification to be revised instead of the program. On the other hand, especially in big systems, where different parts of the specification have different degrees of importance, a program might be accepted if it is \enquote{fit for use}, in the sense that it matches the specification enough to be used for its (main) intended purpose, but not (necessarily) completely. We will not tackle these considerations here, but we will hint on how we believe they could be tackled.

Since specifications are truth-determining for programs (and, dually, programs are correct in view of specifications), there arises the problem of checking whether the created program has the intended properties: the problem of verification. Verification of some sort, formal (theorem proving, model checking), experimental (testing), or mixed is crucial in the development process. In fact, \cite{angius_mutual_2020} argues that formal and experimental verification are intertwined.

Often, when verifying that a program meets its specification, possible failures in lower LoAs are tacitly or explicitly ignored \cite[Section~\emph{Fairness Constraints as Ceteris Paribus Modifiers}]{angius_abstraction_2013}\cite[Section~\emph{Verifying Programs}]{turner_computational_2020}. Since we are focusing on the LoAs of programs and specifications, our framework can be considered to incorporate such simplifications in its design.

\section{A logic of software development}\label{section:logic-of-program-development}

Fix a finite alphabet and let $S$ be the set of all strings that may be written with characters of the alphabet; of course, $S$ is countably infinite. Let $\emptystring \in S$ be the empty string.

\begin{nota}
If $\function{f}{A}{B}$ and $\function{g}{B}{C}$ are functions, then $f\catcompose g$ is their composition.
\end{nota}

\begin{nota}
$\Naturals$ is the set of natural numbers with their usual order. For a set $X$, $X^{\Naturals^\star}$ is the set of all finite sequences of $X$; we accept that $\emptyset\in X^{\Naturals^\star}$ by convention.
\end{nota}

\begin{nota}
Let $\mathbf{P_n}$ be a predicate that depends on a parameter $n\in\Naturals$. We can then form the predicate \enquote{$\mathbf{P_n}$ in the limit},
\[
\trueInLimit{\mathbf{P_n}}{n} \equiv \text{there exists some } n \in \Naturals \text{ such that } m\geq n \text{ implies } \mathbf{P_m}.
\]
\end{nota}

\begin{rem}
Let $\mathbf{P_n}$, $\mathbf{Q_n}$ be predicates. Then, $\trueInLimit{\left(\mathbf{P_n} \andalso \mathbf{Q_n}\right)}{n}$ iff $\trueInLimit{\mathbf{P_n}}{n}$ and $\trueInLimit{\mathbf{Q_n}}{n}$.
\end{rem}

In the original framework of~\cite{arageorgis_relativism_2017}, the objective part of truth and meaning is modelled via possible worlds, a collection of which constitutes background knowledge; at each stage of scientific inquiry, hypotheses generators (\ie, scientists or research programmes), after examining previous evidence given to them by the world, output a hypothesis (\ie, a conjecture based on their conceptual choices) and an action, to which each possible world responds with fresh evidence and with a valuation of what they can consider true, false, or meaningless based on their selected logical framework. Notions of language and theory are defined on top of this interaction for every stage of inquiry. Using these elements, some properties are stated that aim to express the correctness and completeness of theories. Also, a notion of translatability between theories is defined.

The notions of hypotheses, and actions, valuations, and evidence \textemdash and hence also reasoning languages and theories\textemdash remain present in our generalisation, since program generators (\ie, programmers, programming teams, or even automatic code generators) entertain some hypotheses regarding the behaviour of their produced code and its (natural as well as artificial) environment.  We add some code is produced at every time instant; hence, the response of the possible worlds needs to be extended with program semantics, allowing different programmers to work in different programming languages, or even programming paradigms. The notion of background knowledge is reinterpreted to model specifications regarding the deliverable piece of code.
Moreover, we state some properties that aim to express the well-formedness of a program and its correctness relative to its specification. We also extend the notion of translatability so as to include programs.

\subsection{Programs}
Let $P$ denote all possible programs; notice that $P$ may be allowed to not contain only valid programs, but also any code that might be produced by programmers (for example, code with typos). Since programs are actually strings, it is enough to define $P\subset S$ for this presentation; however, notice that programs could have more structure than just plain strings ($P$ could, for example, include structured metadata along with the code of each program). We will need $P$ to contain an empty program; we will use the empty string for that purpose, \ie, $\emptystring\in P$.

Regarding the semantics of programs, we do not want to restrain ourselves to some particular variant (\eg, operational, denotational) or programming paradigm (\eg, procedural, functional), hence we will use a general notion of program semantics as a function that maps programs to arbitrary mathematical objects. As argued in~\cite[Chapter~10]{turner_computational_2018}, this formalism, although usually stated in the context of denotational semantics, can also incorporate operational semantics (by mapping programs to functions on states).

We are also interested in comparing the high-level behaviour of programs, in order to detect whether a program $P^\prime$ manifests (at least) the behaviours of another program $P$. This can be achieved by a relation of \emph{simulation}, or \emph{inexact copy} in the sense of~\cite{angius_logic_2018}, which uses a notion of state-based simulation that results in a reflexive and transitive relation, \ie, a preorder; here, we will only assume the algebraic properties of simulation without being confined to a state-based definition.

\begin{defi}
A \emph{semantical framework} is a tuple $\tuple{I, \bottom, \interpretation{\cdot}, \simulatedBy}$ where $I$ is an arbitrary set of semantical objects, $\function{\interpretation{\cdot}}{P}{I}$ maps each program to its \emph{semantics}, $\bottom \in I$ is the value to which non-valid programs are mapped, and the \emph{simulation relation} $\simulatedBy\subseteq I\times I$ is a preorder on $I$ with $\bottom$ as its minimum element. The empty program shall be treated specially and be mapped to $\bottom$, \ie, $\interpretation{\emptystring}=\bottom$. Let $\mathcal{P}$ be the set of all semantical frameworks.
\end{defi}
\begin{nota}
Let $\tuple{I, \interpretation{\cdot}, \simulatedBy}$ be a semantical framework. We extend the notation $\simulatedBy$ to programs by defining, for $p_1, p_2\in P$, that $p_1\simulatedBy p_2$ iff $\interpretation{p_1}\simulatedBy \interpretation{p_2}$. Also, we write $x \bisimilar y$ iff $x \simulatedBy y$ and $y\simulatedBy x$ (alluding to the notion of bisimulation).
\end{nota}

\subsection{Reasoning about programs and the world}

In our setting, verification of programs will be modelled via an (abstract) notion of reasoning on program properties. This is more closely related to formal verification, but, due to the abstractness of our definitions, it can also be made to include descriptions of experimental verification results.

Let $H\subset S$ be a designated subset of $S$ that contains hypotheses. We can think of these hypotheses as descriptions of program properties and functionalities; as such, they can be used for modelling notions such as functional specifications or formal reasoning on programs and properties of the program's environment. In an extended framework which would include physical machines, hypotheses might be extended to also include relevant statements about the physical world.

A \emph{truth assignment} is any function $\function{u}{H}{\Set{\true, \false, \meaningless}}$ ($\true$ for true, $\false$ for false, and $\meaningless$ for meaningless) such that $u(\emptystring) = \meaningless$. Let $\mathcal{U}$ be the set of all truth assignments.

Although we are using only three truth values, it could be conceivable to extend the framework to use more truth values (perhaps organised in a bilattice~\cite{fitting_bilattices_2006}), in an attempt to model situations where programs might be accepted as \enquote{fit for use}.

\subsection{Specifications}

Let $A$ be an arbitrary set of possible actions and $E$ be an arbitrary set of possible evidence. Actions encode voluntary acts of software developing that are under the control of the programmers. Evidence encode what can be observed by the programmers. These sets will be different in any application of our framework: for example, actions might contain the data of verification attempts, encodings of some actual actions of the programmer, etc; evidence might contain verification results (as an alternative of encoding them into the hypotheses), encodings of (past or current) events, etc.

A central notion in our definition of specification will be that of a possible world. In the words of~\cite[Section~III]{kelly_reliability_1992}, a possible world \enquote{is the component of truth that cannot be manipulated at will}. Hence, while the programmer can (in principle) code, assume, and do whatever she pleases, the world poses some semantical (and practical) limitations. This reflects the discussion in Section~\ref{text:normative-force-opposed-to-science} regarding the normative force of specifications over programs, especially when taking into account that in the framework of~\cite{arageorgis_relativism_2017} possible worlds have a similar force against scientific theories.

The term \enquote{possible world} is kept unchanged with respect to~\cite{arageorgis_relativism_2017} for one more reason: as argued in~\cite[Section~3.2]{zave_four_1997}, specifications can be thought of as containing only statements about the environment of the machine, either per se or in relation to the machine; such a viewpoint is meant to aid in avoiding implementation bias during the formulation of specifications. The term \enquote{possible world} can serve as a subtle pointer to that viewpoint.

\begin{defi}
	A \emph{possible world} is a function $\function{w}{P\times H\times A\times\Naturals}{\mathcal{P}\times\mathcal{U}\times E}$.
\end{defi}

Hence, at each stage of software development, the world responds to the current program, hypothesis, and action of the programmer by imposing the semantics of the program, specifying what can be considered true, false, or meaningles (for this specific programmer), and outputting a piece of evidence. Notice that this is intended as a behavioural, not ontological, description of our possible worlds; in other words, it is an \emph{interface} they must adhere to, but the inner structure of a possible world might be different than a function.

The fact that truth assignments and program semantics are decided by the possible world at each time instance \emph{depending on the actions of the program generators} gives the ability to incorporate what we dubbed \emph{horizontal relationism}, because each program generator might perform actions that end up in different truth or semantics; however, since they are controlled by the world and not completely by the program generator, truth, meaning, and evidence also have an objective side, reflecting the fact that these elements depend also on factors not under the control of the software developers.

This definition does not use the most generic signature of possible worlds of~\cite[Section~III.A]{kelly_reliability_1992}; such a signature would have as part of its input an infinite sequence of $P\times H\times A$ tuples instead of a single $P\times H\times A$ tuple, so as to allow the output to depend on the past actions of the programmer (the part of the sequence with indexes before the time argument) and the future actions of the programmer (the part of the sequence with indexes after the time argument).
However, for our purposes, the present status seems to be enough, since possible worlds will be used as building blocks of specifications. We consider specifications to be independent of previous (and definitely of future) actions of programmers. Moreover, when programs are compared to their specifications, they are compared to the \emph{current} state of the specification. Note that, if we wished to model specification change (and to do so accurately), we might have needed to resort to dependence on the past, hence replace the $P\times H\times A$ tuple with a finite sequence (having length equal to the time argument) of $P\times H\times A$ tuples.

In the original framework of~\cite{arageorgis_relativism_2017}, a set of possible worlds defines the \emph{background knowledge} shared by two different generators of scientific theories; the intuition behind this definition is that a property about the actual world is equivalent to the set of possible worlds that exhibit it. We will reinterpret the same construct, sets of possible worlds, as specifications; similarly, the intuition will be that a property about the program to be created is equivalent to the set of the worlds that specify it.

\begin{defi}
	A \emph{specification} $K$ is a set of possible worlds.
\end{defi}

Each possible world in a specification relaxes the behaviour of the program in some (possibly different) ways; for example, world $w_1$ might allow the program to be written in more programming languages, $w_2$ might omit a temporal liveness property such as \enquote{for each user, a logout happens at most 5 minutes after a login}, $w_3$ might omit a temporal safety property such as \enquote{no user that does not have the role \lstinline|Admin| can access the control panel}, etc (for a definition of safety and liveness properties, see, for example,~\cite{angius_copying_2022}). Hence, the intersection of all the properties that are present in all the worlds corresponds to our usual intuition about specifications; this statement will have its mathematical counterpart in the definition of theories below.

The fact that the output of each possible world depends on time allows for a (basic) form of specification change. Notice that this change does not depend on the actions of any specific programmer; it is defined independently of the interaction between world and programmer. Hence, specifications change uniformly for all programmers following them.

\subsection{Developing programs}

We now turn to describing programmers. We will use the term \emph{program generator}, in order to also be able to accommodate other kinds of entities, such as teams of programmers or even automatic code generators. As with possible worlds, our definition is to be read behaviourally, not ontologically.

\begin{defi}
A \emph{program generator} is a function $\function{\delta}{E^{\Naturals^\star}}{P\times H\times A}$.
\end{defi}

The structures of possible worlds and program generators allows them to interact.
At each point $n$ in time, a program generator $\delta$ is free to review the historical evidence and, having done so, produces a program, accompanied by a hypothesis regarding its functionality and its properties, and performs some action. The world $w$ can then respond and, in the process, create some fresh evidence. This allows the program generator to produce new output at the next point in time. This process can be summarised in mathematical notation as a sequence 
$\tuple{
\program{w}{\delta}{n},
\hypothesis{w}{\delta}{n},
\truth{w}{\delta}{n},
\interpretationSet{w}{\delta}{n},
\nonProgram{w}{\delta}{n},
\semantics{\cdot}{w}{\delta}{n},
\simulation{w}{\delta}{n},
\action{w}{\delta}{n},
\evidence{w}{\delta}{n}
}_{n\in\Naturals}$, where
\begin{gather*}
\left(\program{w}{\delta}{0}, \hypothesis{w}{\delta}{0}, \action{w}{\delta}{0}\right) = \delta\left(\emptyset\right),\\
\left( \tuple{\interpretationSet{w}{\delta}{n}, \nonProgram{w}{\delta}{n}, \semantics{\cdot}{w}{\delta}{n}, \simulation{w}{\delta}{n}},\truth{w}{\delta}{n}, \evidence{w}{\delta}{n}\right) = w\left(\program{w}{\delta}{n}, \hypothesis{w}{\delta}{n}, \action{w}{\delta}{n}, n\right),\\
\left(\program{w}{\delta}{n+1}, \hypothesis{w}{\delta}{n+1}, \action{w}{\delta}{n+1}\right) = \delta\left(\left(\evidence{w}{\delta}{m}\right)_{m\leq n}\right).
\end{gather*}

\begin{exa}
	Consider a setting where a programmer $\delta$ is using git~\cite{git} to store code and wants to write a web application that has a control panel. According to the specification, the web application must have a \lstinline|/login| endpoint where users can login; each user has a role and users that do not have role \lstinline|Admin| cannot access the control panel. Hence, each world of the specification contains these properties, however different worlds might also allow more behaviours.
	
	$\delta$ has decided to work with the PHP programming language and validate their program using linear temporal logic (which happens to be supported by the specification) and automated tests. For each step, the current program is a string of (hopefully well-formatted) PHP code; the hypothesis contains a conjunction of the following properties (possibly among many other properties), formally expressed in the selected logic:
	\begin{itemize}
		\item For each request to endpoint \lstinline|/login|, the response status is not an \lstinline|HTTP 404| \lstinline|Not found|.
		\item For each request to endpoints that start with \lstinline|/admin/|, if the requesting user does not have role \lstinline|Admin|, then the response status is an \lstinline|HTTP 403| \lstinline|Forbidden| or an \lstinline|HTTP 401| \lstinline|Unauthorized|.
	\end{itemize}
	Moreover, for each step, the action of $\delta$ contains a git commit and the automated tests that are to be run.
	
	The response of possible worlds to $\delta$ always includes the semantics of PHP. Evidence contains the results of the tests and the changes to the git repository (also, maybe a textual representation of the specification). Regarding truth assignments, they map valid formulas of temporal logic to truth values $\true$ and $\false$ and everything else to $\meaningless$; the formulas that are mapped to $\true$ are exactly those that are validated by the program and are consistent with what the possible world allows.
\end{exa}

\begin{exa}[Software evolution \`{a} la~\cite{primiero_theory_2021}] 
	In~\cite{primiero_theory_2021}, a formal system based on belief change is defined to reason about software evolution. In that system, a specification is a propositional theory $\mathcal{S}_m$. The programmer creates an intended implementation $\mathcal{I}$ of the specification. A description $S_i$ of the implementation is then extracted from $\mathcal{I}$, consisting of a set of propositional formulas (not necessarily closed under logical consequence). The theory $\mathcal{S}_m$ and the description $S_i$ can be compared in order to reason whether the implementation satisfies the specification correctly (and whether the entrenchment of properties relative to each other is as desired). In case changes are required, $S_i$ is modified via some specific revision operators (defined in~\cite[Section~3]{primiero_theory_2021}) and this results in a new $S_i^\prime$ which can be abstracted into a new $\mathcal{S}_m^\prime$. The new specification can be then used by the programmer to create a new intended implementation. A simplified form (ignoring entrenchment of properties) of this formal system can be expressed in our framework as follows.
	
	Since $\mathcal{S}_m$ expresses specific behaviours that the intended software system \emph{should} display, its role is similar to that of a specification in our setting. We can define each possible world $w$ to contain a copy of $\mathcal{S}_m$.
	
	$S_i$ represents a model of a specific implementation. In our setting, we can use the conjunction of the formulas in $S_i$ as (part of) the hypothesis at each time instant. Notice that, in such cases, where the hypothesis is meant to contain a description of the program, it is the responsibility of the possible world to check both that $S_i$ is indeed an accurate description of the given program and that it conforms to the specification.
	
	At each step of the process, any (or maybe the first encountered) possible inconsistency between $\mathcal{S}_m$ and $S_i$ is encoded into the evidence returned by the world, so that the programmer can react to it.
	In this context, the revision operators can be considered as \enquote{recipes} that program generators can follow to respond to evidence.
	
	Contrary to~\cite{primiero_theory_2021}, which assumes that $\mathcal{S}_m$ might change at every step of the process, our framework can only allow the update operations to affect $S_i$. While specifications in our framework can change over time, this change is uniform for all different program generators and hence should not depend on the specifics of the workings of any single program generator; more generally, specifications as normative entities (\ie, functional specifications, such as $\mathcal{S}_m$) change quite more rarely and by different kinds of procedures than specifications as descriptive entities (\ie, system specifications, such as $S_i$).
\end{exa}

\noindent The elements obtained during the interaction of program generators and possible worlds can be used for defining other structures, which will be useful when stating properties.

At each time instant $n$ and possible world $w$, the programmer $\delta$ is presented with a truth assignment on possible hypotheses. Hence, there arises a logical language (of program properties)
 \[
\lang{w}{\delta}{n} = \Set{s\in H \suchthat \truth{w}{\delta}{n}(s) \neq \meaningless}.
\] 

Moreover, it is possible to define semantic entailment, both locally (in the given possible world) and globally (over the whole specification).

\begin{defi}
Let $\delta$ be a program generator, $K$ a specification and $n\in\Naturals$.
\begin{enumerate}
\item If $w$ is a possible world, for $r,s\in H$, we define $\implication{r}{s}{\Set{w}}{\delta}{n}$ to be true iff $\truth{w}{\delta}{n}(r) = \true$ implies $\truth{w}{\delta}{n}(s) = \true$.
\item For $r,s \in H$, we define $\implication{r}{s}{K}{\delta}{n}$ to be true iff $\implication{r}{s}{\Set{w}}{\delta}{n}$ for all $w \in K$.
\end{enumerate}
\end{defi}

Hence, every program generator $\delta$ in every possible world $w$ at every $n\in\Naturals$ can be considered to possess a logic having $\lang{w}{\delta}{n}$ as sentences and $K$ as its (intended) models. This motivates the notion of a (logical) \emph{theory} (of program properties): it is formed as the closure (over global semantic entailment) of the current hypothesis, \ie,

\[ \theory{w}{\delta}{K}{n} = \Set{s\in H \suchthat \implication{\hypothesis{w}{\delta}{n}}{s}{K}{\delta}{n}} . \]

Notice that, in general, $\theory{w}{\delta}{K}{n}$ may contain strings marked as false by the current truth assignment; it may even not be a subset of $\lang{w}{\delta}{n}$.

Similarly to the logical definitions, we can define notions that relate to programs; these will be the sets of semantically valid programs. The programs that \enquote{make sense} locally (\ie, in the given possible world) for the program generator are
\[
\programs{w}{\delta}{n} = \Set{\emptystring, \program{w}{\delta}{n}}\cup\Set{p\in P \suchthat \nonProgram{w}{\delta}{n} \neq \semantics{p}{w}{\delta}{n}};
\]
they include the current program, in virtue of it being stated by the program generator, even when it does not actually make sense semantically. On the other hand, the programs that make sense globally (\ie, for the whole specification) for the program generator are
\[
\validPrograms{K}{\delta}{n} = \Set{p \in P \suchthat \forall w \in K,\nonProgram{w}{\delta}{n} \neq \semantics{p}{w}{\delta}{n}}.
\]
Notice that memberhood in these sets is not determined by whether a program meets the specification, but only by whether it is correctly formed in its programming language. The question of whether the program meets its specification is decided by the response of the world to the hypothesis that accompanies the program.

\begin{rem}
$\validPrograms{K}{\delta}{n}  \subseteq \programs{w}{\delta}{n}$ and $\emptystring\notin \validPrograms{K}{\delta}{n}$.
\end{rem}

\subsection{Properties of program generators}

Although any kind of program generator is possible, even for example one that always outputs nonsense, we will usually want to focus our discussion on program generators that actually follow their specification correctly. As in~\cite{arageorgis_relativism_2017}, some properties can be defined that describe good behaviour of a program generator at a specific time instant (and world). Sound methodologies of programming should then correspond to rules that program generators can follow in order to achieve, possibly not from scratch but hopefully from some time onwards, the desired properties.

The simplest such desired property is that the current program must actually be a valid program. Moreover, it is natural to ask that its behaviour is correctly described by the current hypothesis, which, in addition, is compatible with the specification. A more advanced property is that the current theory fully describes (correctly) the behaviour of the current program. These three properties are expressed in the next definition, followed by some of their implications.

Let $\delta$ be a program generator, $K$ a specification, $w \in K$, and $n\in\Naturals$. 

\begin{defi}
The program generator $\delta$:
\begin{enumerate}
\item \emph{is valid at $n$, $w$ given specification $K$} (notation $\isValid{\delta}{w,K}{n}$) iff $\program{w}{\delta}{n} \in \validPrograms{K}{\delta}{n}$,
\item \emph{is correct at $n$, $w$} (notation $\isCorrect{\delta}{w}{n}$) iff $\truth{w}{\delta}{n}(\hypothesis{w}{\delta}{n}) = \true$,
\item \emph{is complete at $n$, $w$ given specification $K$} (notation $\isComplete{\delta}{w,K}{n}$) iff 
\[
\Set{s\in H \suchthat \truth{w}{\delta}{n}(s) = \true} \subseteq \theory{w}{\delta}{K}{n},
\]
\item \emph{is mature at $n$, $w$ given specification $K$} (notation $\isMature{\delta}{w,K}{n}$) iff $\isCorrect{\delta}{w}{n}$, $\isValid{\delta}{w,K}{n}$, and $\isComplete{\delta}{w,K}{n}$.
\end{enumerate}
\end{defi}

\begin{prop}\label{prop:properties-of-theories}
If $\isCorrect{\delta}{w}{n}$, then $\theory{w}{\delta}{K}{n} \subseteq \Set{s\in H\suchthat \truth{w}{\delta}{n}(s)=\true}$ (hence also $\theory{w}{\delta}{K}{n}\subseteq \lang{w}{\delta}{n}$). 
\end{prop}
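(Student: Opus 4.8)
The plan is to unwind the definitions of $\theory{w}{\delta}{K}{n}$, global semantic entailment, and correctness, and observe that the only thing needed is that $w$ itself belongs to $K$ (which is part of the standing assumptions, since the proposition is stated for $w \in K$).

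First I would take an arbitrary $s \in \theory{w}{\delta}{K}{n}$. By definition of the theory, this means $s \in H$ and $\implication{\hypothesis{w}{\delta}{n}}{s}{K}{\delta}{n}$ holds; unwinding the definition of global entailment, this says that for every $w' \in K$ we have $\implication{\hypothesis{w}{\delta}{n}}{s}{\Set{w'}}{\delta}{n}$, i.e.\ $\truth{w'}{\delta}{n}(\hypothesis{w}{\delta}{n}) = \true$ implies $\truth{w'}{\delta}{n}(s) = \true$. Next I would instantiate this universally quantified statement at the particular world $w' = w$, which is legitimate precisely because $w \in K$. This yields: if $\truth{w}{\delta}{n}(\hypothesis{w}{\delta}{n}) = \true$, then $\truth{w}{\delta}{n}(s) = \true$.

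Then I would invoke the hypothesis $\isCorrect{\delta}{w}{n}$, which by definition is exactly $\truth{w}{\delta}{n}(\hypothesis{w}{\delta}{n}) = \true$. Discharging the antecedent, I conclude $\truth{w}{\delta}{n}(s) = \true$, so $s \in \Set{s\in H\suchthat \truth{w}{\delta}{n}(s)=\true}$. Since $s$ was arbitrary, this gives the claimed inclusion. For the parenthetical consequence, note that any such $s$ satisfies $\truth{w}{\delta}{n}(s) = \true \neq \meaningless$, hence $s \in \lang{w}{\delta}{n}$ by definition of the language; thus $\theory{w}{\delta}{K}{n} \subseteq \lang{w}{\delta}{n}$ follows by transitivity of $\subseteq$.

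There is no real obstacle here: the statement is a direct consequence of the definitions, and the only point requiring (minimal) attention is the legitimacy of specializing the global entailment to the world $w$, which is immediate from $w \in K$. I would keep the proof to a few lines.
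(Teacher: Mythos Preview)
Your proof is correct and follows essentially the same route as the paper's: take $s\in\theory{w}{\delta}{K}{n}$, pass from global entailment over $K$ to local entailment at the particular $w\in K$, and then use correctness to discharge the antecedent. Your version is simply more explicit about the instantiation step and about the parenthetical inclusion in $\lang{w}{\delta}{n}$.
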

\begin{proof}
Let $s\in\theory{w}{\delta}{K}{n}$. Then, $\implication{\program{w}{\delta}{n}}{s}{K}{\delta}{n}$ and, a fortiori, $\implication{\hypothesis{w}{\delta}{n}}{s}{\Set{w}}{\delta}{n}$. Since $\truth{w}{\delta}{n}(\hypothesis{w}{\delta}{n}) = \true$, it follows that $\truth{w}{\delta}{n}(s) = \true$.
\end{proof}

\begin{prop}
If $\isMature{\delta}{w,K}{n}$, then
$
 \theory{w}{\delta}{K}{n} = \Set{s\in H\suchthat \truth{w}{\delta}{n}(s)=\true}
$.
\end{prop}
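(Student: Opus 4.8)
The statement is an equality of two sets, so the plan is to establish the two inclusions separately and then combine them. The key observation is that maturity of $\delta$ at $n$, $w$ given $K$ unpacks, by definition, into the conjunction of $\isCorrect{\delta}{w}{n}$, $\isValid{\delta}{w,K}{n}$, and $\isComplete{\delta}{w,K}{n}$, and that two of these three conjuncts already deliver exactly the inclusions we need (validity will turn out to be unused here).

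First I would handle the inclusion $\theory{w}{\delta}{K}{n} \subseteq \Set{s\in H\suchthat \truth{w}{\delta}{n}(s)=\true}$. This follows immediately from $\isCorrect{\delta}{w}{n}$ together with Proposition~\ref{prop:properties-of-theories}, which is precisely the assertion that correctness forces the theory to sit inside the set of true sentences. No further argument is required; one simply invokes the earlier proposition with the hypothesis $\isCorrect{\delta}{w}{n}$ extracted from maturity.

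Next I would handle the reverse inclusion $\Set{s\in H\suchthat \truth{w}{\delta}{n}(s)=\true} \subseteq \theory{w}{\delta}{K}{n}$. This is nothing more than the definition of $\isComplete{\delta}{w,K}{n}$, which again is one of the conjuncts guaranteed by $\isMature{\delta}{w,K}{n}$. Combining the two inclusions yields the claimed equality $\theory{w}{\delta}{K}{n} = \Set{s\in H\suchthat \truth{w}{\delta}{n}(s)=\true}$.

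There is essentially no obstacle: the result is a routine bookkeeping consequence of the definition of maturity and Proposition~\ref{prop:properties-of-theories}. The only thing worth a remark is that the validity clause of maturity plays no role in this particular proof — it is needed elsewhere (e.g.\ for statements about $\program{w}{\delta}{n}$) but not for characterising the theory — so one should be careful to cite only correctness and completeness rather than gesturing vaguely at ``maturity'' as a whole.
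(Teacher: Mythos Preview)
Your proposal is correct and follows exactly the same route as the paper: invoke correctness together with Proposition~\ref{prop:properties-of-theories} for the inclusion $\theory{w}{\delta}{K}{n} \subseteq \Set{s\in H\suchthat \truth{w}{\delta}{n}(s)=\true}$, and then use completeness for the reverse inclusion. Your remark that validity is not needed here is also accurate.
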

\begin{proof}
Since $\isCorrect{\delta}{w}{n}$, 
Proposition~\ref{prop:properties-of-theories} implies that
$
\theory{w}{\delta}{K}{n}\subseteq \Set{s\in H \suchthat \truth{w}{\delta}{n}(s) = \true}
$.
The fact that $\isComplete{\delta}{w,K}{n}$ completes the argument.
\end{proof}

\subsection{Translations between the outputs of program generators}

Since there might be multiple different program generators working on the same specification, it is important to be able to compare their outputs. Among the items created by the interaction of program generators and worlds, we are mainly interested in comparing the programs and the theories on the programs.
On the other hand, actions and evidence are meant to be esoteric to the interaction between world and program generator, hence those need not concern us during comparisons. These considerations motivate the following definition.

\begin{defi}\label{def:representation-func}
Let $\gamma$ and $\delta$ be two program generators with common specification $K$. For every possible worlds $w_1, w_2\in K$ and stages $m$, $n$ of development, we will say that the output of $\gamma$ is \emph{translatable} to the output of $\delta$ (notation $\representable{\gamma}{\delta}{m}{n}{w_1}{w_2}$) iff there exists a \emph{translation function} $\function{\representation{w_1}{w_2}{\gamma}{\delta}{m}{n}}{\programs{w_1}{\gamma}{m}\cup \lang{w_1}{\gamma}{m}}{P\cup H}$ such that
\begin{enumerate}
\item\label{def:representation-func:preserves-bottom} $\representation{w_1}{w_2}{\gamma}{\delta}{m}{n}(\emptystring) = \emptystring$,
\item\label{def:representation-func:maps-lang} $\representation{w_1}{w_2}{\gamma}{\delta}{m}{n}(\lang{w_1}{\gamma}{m}) \subseteq \lang{w_2}{\delta}{n}$
\item\label{def:representation-func:maps-programs} $\representation{w_1}{w_2}{\gamma}{\delta}{m}{n}(\programs{w_1}{\gamma}{m}) \subseteq \programs{w_2}{\delta}{n}$,
\item\label{def:representation-func:maps-program}  $\representation{w_1}{w_2}{\gamma}{\delta}{m}{n}(\program{w_1}{\gamma}{m}) \simulation{w_2}{\delta}{n} \program{w_2}{\delta}{n}$,
\item\label{def:representation-func:preserves-truth} for $r, s\in \lang{w_1}{\gamma}{m}$, 
if $\implication{r}{s}{\Set{w_1}}{\gamma}{m}$, then $\implication{\representation{w_1}{w_2}{\gamma}{\delta}{m}{n}(r)}{\representation{w_1}{w_2}{\gamma}{\delta}{m}{n}(s)}{\Set{w_2}}{\delta}{n}$,
\item\label{def:representation-func:preserves-semantics} for $p_1, p_2\in\programs{w_1}{\delta}{m}$, if $p_1 \simulation{w_1}{\gamma}{m}p_2$ then $\representation{w_1}{w_2}{\gamma}{\delta}{m}{n}(p_1) \simulation{w_2}{\delta}{n} \representation{w_1}{w_2}{\gamma}{\delta}{m}{n}(p_2)$,
\item\label{def:representation-func:maps-theory} $\representation{w_1}{w_2}{\gamma}{\delta}{m}{n}(\theory{w_1}{\gamma}{K}{m}) \subseteq \theory{w_2}{\delta}{K}{n}$,
\item\label{def:representation-func:maps-valid-programs} $\representation{w_1}{w_2}{\gamma}{\delta}{m}{n}(\validPrograms{K}{\gamma}{m}) \subseteq \validPrograms{K}{\delta}{n}$.
\end{enumerate}
\end{defi}

\noindent Points~\ref{def:representation-func:maps-lang} and~\ref{def:representation-func:maps-programs} specify that the translation maps sentences to sentences and programs to programs, while points~\ref{def:representation-func:maps-theory} and~\ref{def:representation-func:maps-valid-programs} require that the same happens for sentences in theories and for valid programs. Since there is a designated string, the empty one, point~\ref{def:representation-func:preserves-bottom} takes care of it.

Point~\ref{def:representation-func:preserves-semantics} requires that a translation preserve program semantics. Of course, source code meaningful in one setting may not be meaningful in another (for example, $\gamma$ may operate in one programming language and $\delta$ in another). Most importantly, however, programs may be mapped to different kinds of mathematical objects in the two settings (for example, $\gamma$ may operate in a functional paradigm and $\delta$ in a procedural paradigm), hence we can only express preservation of program semantics indirectly, by the (weaker) condition that translation must preserve the simulation relation. 

Point~\ref{def:representation-func:preserves-truth} requires that a translation preserve logical entailment. Since, contrary to the situation with programs, truth assignments have the same range in all settings, we could have opted for the stronger condition $\truth{w_1}{\gamma}{m}(s) = \truth{w_2}{\delta}{n}(\representation{w_1}{w_2}{\gamma}{\delta}{m}{n}(s))$; however this would have two important ramifications: firstly, it would complicate adding more truth values (maybe for only some of the program generators), and, secondly, it would prevent more mature agents from correctly reinterpreting the false theories of their less mature counterparts.

Finally, point~\ref{def:representation-func:maps-programs} requires  that, for a successful translation, the program of $\delta$ must be able to perform (at least) the operations of the translation of the program of $\gamma$ (again, since the simulation relation might be different in the two worlds, we can only compare indirectly).

\begin{rem}
Translation functions compose: if $\representation{w_1}{w_2}{\beta}{\gamma}{l}{m}$ and $\representation{w_2}{w_3}{\gamma}{\delta}{m}{n}$ are translation functions, then $ \representation{w_1}{w_2}{\beta}{\gamma}{l}{m} \catcompose \representation{w_2}{w_3}{\gamma}{\delta}{m}{n}$ is a translation function.
Since the identity function is a translation function, it follows that, given a specification $K$, a category of translation functions arises.
\end{rem}

\begin{cor}
If $\representation{w}{w^\prime}{\gamma}{\delta}{m}{n}$ is a translation function and $\representation{w^\prime}{w}{\delta}{\gamma}{n}{m}$ is another translation function, then
\begin{enumerate}
\item $(\representation{w}{w^\prime}{\gamma}{\delta}{m}{n} \catcompose \representation{w^\prime}{w}{\delta}{\gamma}{n}{m})(\program{w}{\gamma}{m})  \simulation{w}{\gamma}{m} \program{w}{\gamma}{m}$,
\item $(\representation{w}{w^\prime}{\gamma}{\delta}{m}{n} \catcompose \representation{w^\prime}{w}{\delta}{\gamma}{n}{m})(\theory{w}{\gamma}{K}{m}) \subseteq  \theory{w}{\gamma}{K}{m}$.
\end{enumerate}
\end{cor}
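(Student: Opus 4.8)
The plan is to read both claims off the clauses of Definition~\ref{def:representation-func} applied to the composite map. Write $f = \representation{w}{w^\prime}{\gamma}{\delta}{m}{n}$ and $g = \representation{w^\prime}{w}{\delta}{\gamma}{n}{m}$ and set $h = f \catcompose g$. First I would invoke the preceding Remark: since $f$ and $g$ are translation functions and the target setting of $f$ (the generator $\delta$ in world $w^\prime$ at stage $n$) is exactly the source setting of $g$, their composite $h$ is again a translation function; the point to record explicitly is that the source and target settings of $h$ both coincide with the generator $\gamma$ in world $w$ at stage $m$. Thus $h$ is a map $\programs{w}{\gamma}{m}\cup\lang{w}{\gamma}{m} \to P \cup H$ satisfying the eight clauses of Definition~\ref{def:representation-func} with $w_1 = w_2 = w$, $\gamma$ in both generator roles, and $m = n$. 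Claim~(1) is then precisely clause~\ref{def:representation-func:maps-program} for $h$, namely $h(\program{w}{\gamma}{m}) \simulation{w}{\gamma}{m} \program{w}{\gamma}{m}$, and claim~(2) is precisely clause~\ref{def:representation-func:maps-theory} for $h$, namely $h(\theory{w}{\gamma}{K}{m}) \subseteq \theory{w}{\gamma}{K}{m}$. Once the Remark is in hand, nothing further is needed.

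If one prefers not to lean on the Remark, I would unfold the two claims directly. For~(1): clause~\ref{def:representation-func:maps-program} for $f$ gives $f(\program{w}{\gamma}{m}) \simulation{w^\prime}{\delta}{n} \program{w^\prime}{\delta}{n}$; since both $f(\program{w}{\gamma}{m})$ (by clause~\ref{def:representation-func:maps-programs} for $f$) and $\program{w^\prime}{\delta}{n}$ lie in $\programs{w^\prime}{\delta}{n}$, clause~\ref{def:representation-func:preserves-semantics} for $g$ yields $g(f(\program{w}{\gamma}{m})) \simulation{w}{\gamma}{m} g(\program{w^\prime}{\delta}{n})$, while clause~\ref{def:representation-func:maps-program} for $g$ yields $g(\program{w^\prime}{\delta}{n}) \simulation{w}{\gamma}{m} \program{w}{\gamma}{m}$; transitivity of the preorder $\simulation{w}{\gamma}{m}$ then gives the desired $(f \catcompose g)(\program{w}{\gamma}{m}) \simulation{w}{\gamma}{m} \program{w}{\gamma}{m}$. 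For~(2): clause~\ref{def:representation-func:maps-theory} for $f$ gives $f(\theory{w}{\gamma}{K}{m}) \subseteq \theory{w^\prime}{\delta}{K}{n}$ and clause~\ref{def:representation-func:maps-theory} for $g$ gives $g(\theory{w^\prime}{\delta}{K}{n}) \subseteq \theory{w}{\gamma}{K}{m}$; chaining the two inclusions finishes the argument.

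There is no genuine obstacle here; the statement is a bookkeeping corollary of the composition Remark. The only point deserving care — and the one I would be sure to state — is that the composite of the two translation functions has coinciding source and target settings, so that clauses~\ref{def:representation-func:maps-program} and~\ref{def:representation-func:maps-theory}, which by their formulation compare the image of the \emph{current} program (resp.\ theory) with the current program (resp.\ theory) of the target, become \enquote{reflexive} comparisons inside the $\gamma$, $w$, $m$ setting; reflexivity and transitivity of the simulation preorder $\simulation{w}{\gamma}{m}$ are exactly what make clause~\ref{def:representation-func:maps-program} usable in that reflexive form in the unfolded argument.
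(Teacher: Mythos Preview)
Your proposal is correct and matches the paper's approach: the paper states the Corollary immediately after the composition Remark with no explicit proof, so the intended argument is precisely your first one\textemdash the composite $h=f\catcompose g$ is a translation function with coinciding source and target $(\gamma,w,m)$, and claims~(1) and~(2) are then literally clauses~\ref{def:representation-func:maps-program} and~\ref{def:representation-func:maps-theory} for $h$. Your unfolded alternative is also fine and is in fact how one would verify the composition Remark for those two clauses in the first place.
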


\begin{thm}
Let $\gamma$ and $\delta$ be two program generators with common specification $K$. For every possible world $w\in K$ and stages $m$, $n$ of development, if $\isMature{\gamma}{w_1,K}{m}$ and $\isMature{\delta}{w_2,K}{n}$, then $\representable{\gamma}{\delta}{m}{n}{w_1}{w_2}$ or $\representable{\delta}{\gamma}{n}{m}{w_2}{w_1}$.
\end{thm}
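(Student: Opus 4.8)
The plan is to write down an explicit \enquote{trivial} translation function in one of the two directions; the logical half of such a function is entirely uniform, and the whole subtlety --- and the reason the conclusion is a disjunction rather than a conjunction --- sits in a corner case about programs, caused by $\simulatedBy$ being merely a preorder. (I read \enquote{for every possible world $w\in K$} in the statement as \enquote{for all $w_1,w_2\in K$}.) For the logical half, aiming at $\representable{\gamma}{\delta}{m}{n}{w_1}{w_2}$, I would set $f(s)=\hypothesis{w_2}{\delta}{n}$ for every $s\in\lang{w_1}{\gamma}{m}$, where $f$ abbreviates $\representation{w_1}{w_2}{\gamma}{\delta}{m}{n}$. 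Since $\isCorrect{\delta}{w_2}{n}$ gives $\truth{w_2}{\delta}{n}(\hypothesis{w_2}{\delta}{n})=\true$, we get $\hypothesis{w_2}{\delta}{n}\in\lang{w_2}{\delta}{n}$ (condition~\ref{def:representation-func:maps-lang}); condition~\ref{def:representation-func:preserves-truth} is immediate because all sentences receive the same, non-$\meaningless$, image, so only $\implication{\hypothesis{w_2}{\delta}{n}}{\hypothesis{w_2}{\delta}{n}}{\Set{w_2}}{\delta}{n}$ is ever asked; and since $\isCorrect{\gamma}{w_1}{m}$ yields $\theory{w_1}{\gamma}{K}{m}\subseteq\lang{w_1}{\gamma}{m}$ by Proposition~\ref{prop:properties-of-theories} while $\hypothesis{w_2}{\delta}{n}\in\theory{w_2}{\delta}{K}{n}$, condition~\ref{def:representation-func:maps-theory} also holds. (I assume, as seems intended, that $P$ and $H$ overlap only in $\emptystring$, so that defining $f$ separately on programs and on sentences is unambiguous.)

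For programs, the obvious attempt is $f(\emptystring)=\emptystring$ and $f(p)=\program{w_2}{\delta}{n}$ for every other $p\in\programs{w_1}{\gamma}{m}$; using $\isValid{\gamma}{w_1,K}{m}$, $\isValid{\delta}{w_2,K}{n}$ and reflexivity of $\simulation{w_2}{\delta}{n}$ this discharges conditions~\ref{def:representation-func:preserves-bottom}, \ref{def:representation-func:maps-programs}, \ref{def:representation-func:maps-program} and~\ref{def:representation-func:maps-valid-programs}. The one thing that can fail is condition~\ref{def:representation-func:preserves-semantics}, and only on a pair $(p,\emptystring)$ with $p\neq\emptystring$ but $\semantics{p}{w_1}{\gamma}{m}\simulation{w_1}{\gamma}{m}\nonProgram{w_1}{\gamma}{m}$ --- call such a $p$ \emph{inert} ---, for then one would need $\program{w_2}{\delta}{n}\simulation{w_2}{\delta}{n}\emptystring$, which is not guaranteed. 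The remedy is to redirect every inert $p\in\programs{w_1}{\gamma}{m}\setminus\Set{\emptystring}$ to a fixed $q^\star$ with $\semantics{q^\star}{w_2}{\delta}{n}\simulation{w_2}{\delta}{n}\nonProgram{w_2}{\delta}{n}$: condition~\ref{def:representation-func:preserves-semantics} is then repaired, since transitivity together with minimality of $\nonProgram{w_1}{\gamma}{m}$ forbids a non-inert program from lying $\simulation{w_1}{\gamma}{m}$-below an inert one, so $f$ never orders the two classes \enquote{the wrong way}. But now condition~\ref{def:representation-func:maps-valid-programs} needs $q^\star\in\validPrograms{K}{\delta}{n}$ whenever some valid program of $\gamma$ is inert.

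This is where the direction must be chosen. Let $B(\gamma)$ abbreviate \enquote{some $p\in\validPrograms{K}{\gamma}{m}$ satisfies $\semantics{p}{w_1}{\gamma}{m}\simulation{w_1}{\gamma}{m}\nonProgram{w_1}{\gamma}{m}$}, and $B(\delta)$ the symmetric statement. If $\neg B(\gamma)$, take $q^\star=\emptystring$ (no valid program of $\gamma$ is inert, so condition~\ref{def:representation-func:maps-valid-programs} is unaffected); if $B(\delta)$, take $q^\star$ to be a valid inert program of $\delta$ (which also makes condition~\ref{def:representation-func:maps-program} go through when $\program{w_1}{\gamma}{m}$ is itself inert). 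In either subcase all eight conditions hold, so $\representable{\gamma}{\delta}{m}{n}{w_1}{w_2}$; by the mirror-image construction, $\neg B(\delta)$ or $B(\gamma)$ delivers $\representable{\delta}{\gamma}{n}{m}{w_2}{w_1}$. Since \enquote{$\neg B(\gamma)$ or $B(\delta)$} and \enquote{$\neg B(\delta)$ or $B(\gamma)$} cannot both be false, at least one of the two translations exists, which is the claim.

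I expect the routine part to be the bookkeeping of the eight conditions for these explicit maps, and the genuinely delicate point --- the thing that forces an \enquote{or} rather than an \enquote{and} --- to be exactly this inertness case split. It is therefore worth also checking the converse: when $B(\gamma)$ holds but $B(\delta)$ fails, conditions~\ref{def:representation-func:preserves-bottom}, \ref{def:representation-func:preserves-semantics} and~\ref{def:representation-func:maps-valid-programs} jointly force any translation $\gamma\to\delta$ to send a valid inert program of $\gamma$ to a valid inert program of $\delta$, so in that situation no translation $\gamma\to\delta$ exists at all and the disjunction is genuinely necessary.
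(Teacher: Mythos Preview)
Your argument is correct and shares the paper's overall shape --- construct an explicit \enquote{collapse everything} map, treating sentences and programs separately --- but you and the paper spend the disjunction in different places, and your placement is the sharper one.

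On sentences, the paper case-splits on whether one of the two languages contains a false statement and, via a \enquote{without loss of generality}, picks the direction so that a false target $s_{\false}^{\delta}$ is available; true source sentences are sent to $s_{\true}^{\delta}$, false ones to $s_{\false}^{\delta}$. Your constant map $s\mapsto\hypothesis{w_2}{\delta}{n}$ shows this split is unnecessary: conditions~\ref{def:representation-func:maps-lang}, \ref{def:representation-func:preserves-truth} and~\ref{def:representation-func:maps-theory} already hold uniformly. On programs, the paper sends $p$ to $\program{w_2}{\delta}{n}$ iff some valid program lies $\simulation{w_1}{\gamma}{m}$-below $p$, and to $\emptystring$ otherwise, then asserts condition~\ref{def:representation-func:preserves-bottom} \enquote{by definition}; under the preorder reading of \enquote{minimum element} that assertion is precisely your $\neg B(\gamma)$, which is not given.

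Your construction instead locates the disjunction at the genuine obstruction --- valid programs that are $\simulation{w_1}{\gamma}{m}$-below $\nonProgram{w_1}{\gamma}{m}$. The $B(\gamma)/B(\delta)$ case split is the right one, the $q^\star$ device verifies all eight conditions, and your converse remark (that $B(\gamma)\wedge\neg B(\delta)$ actually blocks any translation $\gamma\to\delta$) confirms the disjunction cannot be upgraded to a conjunction at this level of generality. The assumption $P\cap H=\Set{\emptystring}$ you flag is one the paper's proof makes tacitly as well.
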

\begin{proof}
We will construct a (trivial) translation function $f$.

Of course, by maturity, both $\lang{w_1}{\gamma}{m}$ and $\lang{w_2}{\delta}{n}$ contain some true statement, namely, $\hypothesis{w_1}{\gamma}{m}$ and $\hypothesis{w_2}{\delta}{n}$ respectively. Let $s_{\true}^{\gamma}$ be some true statement in $\lang{w_1}{\gamma}{m}$ (any such statement is enough for this proof, but if the set of all true statements is finitely generated by a set $S$, we could set $s_{\true}^{\gamma} = \bigwedge S$; similar considerations apply for $s_{\true}^{\delta}$) and $s_{\true}^{\delta}$ some true statement in $\lang{w_2}{\delta}{n}$. If there is no false statement in $\lang{w_1}{\gamma}{m}$ and no false statement in $\lang{w_2}{\delta}{n}$, then we can define $\forall s\in \lang{w_1}{\gamma}{m}, f(s) = s_{\true}^{\delta}$. Otherwise, at least one of the languages contains a false statement; without loss of generality, let that language be $\lang{w_2}{\delta}{n}$ and let $s_{\false}^{\delta}\in\lang{w_2}{\delta}{n}$ be the false statement. Then, we can define
\[
\forall s \in \lang{w_1}{\gamma}{m}, f(s) = \begin{cases}
s_{\true}^{\delta} \casesif \truth{w_1}{\gamma}{m}(s) = \true\\
s_{\false}^{\delta} \casesowise
\end{cases}
\]
In both cases, $\truth{w_1}{\gamma}{m}(s) = \truth{w_2}{\delta}{n}(f(s))$ for all $s \in \lang{w_1}{\gamma}{m}$ and hence $f$ trivially validates point~\ref{def:representation-func}.\ref{def:representation-func:preserves-truth}. By definition, \ref{def:representation-func}.\ref{def:representation-func:maps-lang} is also valid for $f$.
Finally, by correctness of $\gamma$, $\theory{w_1}{\gamma}{K}{m} \subseteq \Set{s\in S \suchthat \truth{w_1}{\gamma}{m}(s) = \true}$ and hence $f(\theory{w_1}{\gamma}{K}{m}) = \Set{s_{\true}^{\delta}}$, so we obtain~\ref{def:representation-func}.\ref{def:representation-func:maps-theory} by completeness of $\delta$. 

As for programs, define
\[
\forall p \in \programs{w_1}{\gamma}{m}, f(p) = \begin{cases}
\program{w_2}{\delta}{n} \casesif \exists p^\prime \in \validPrograms{K}{\gamma}{m}: p^\prime \simulation{w_1}{\gamma}{m} p\\
\emptystring \casesowise
\end{cases}
\]
By definition, \ref{def:representation-func}.\ref{def:representation-func:preserves-bottom} and~\ref{def:representation-func}.\ref{def:representation-func:maps-programs} hold for $f$. Since $\gamma$ is mature and the simulation relations are reflexive, we obtain~\ref{def:representation-func}.\ref{def:representation-func:maps-program}. By maturity of $\delta$  and the reflexivity of the simulation relation (of $\gamma$), \ref{def:representation-func}.\ref{def:representation-func:maps-valid-programs} is also true. For~\ref{def:representation-func}.\ref{def:representation-func:preserves-semantics}, let $p_1, p_2 \in \programs{w_1}{\gamma}{m}$ with $p_1 \simulation{w_1}{\gamma}{m} p_2$; if $f(p_1) = \emptystring$, then we are done; else, $p^\prime \simulation{w_1}{\gamma}{m} p_1$ for some $p^\prime \in \validPrograms{K}{\gamma}{m}$, hence, by transitivity of the simulation relation (of $\gamma$), $f(p_1) = f(p_2)$ and reflexivity of the simulation relation (of $\delta$) completes the argument.
\end{proof}

\noindent The triviality of the function constructed in the preceding theorem lies on the fact that it has only four output values, thus eliminating, for example, any differences in program behaviour and any relations between properties. However, taking care of that would require possessing more information regarding the inner structure of programs, theories, and the workings of formulating theories of programs, hence it can only happen in more specific instantiations of the framework.

By taking the conditions of the theorem in the limit, we obtain that two program generators that asymptotically arrive at mature programs with respect to a shared specification will asymptotically produce programs and theories one of which can be translated into the other. 

\begin{cor}
Let $\gamma$ and $\delta$ be two program generators with common specification $K$. If for every $w \in K$, $\achievesMaturity{\gamma}{w,K}{n}$ and $\achievesMaturity{\delta}{w,K}{n}$, then for every $w \in K$, 
\[
\trueInLimit{\representable{\gamma}{\delta}{n}{n}{w}{w} \orelse \representable{\delta}{\gamma}{n}{n}{w}{w}}{n}.
\]
\end{cor}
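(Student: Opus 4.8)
The plan is to obtain the Corollary as a direct limiting form of the preceding Theorem, using nothing more than the definition of the $\mathbf{lim}$ predicate. First I would fix an arbitrary possible world $w \in K$ and unfold the two hypotheses. By definition of $\trueInLimit{\cdot}{\cdot}$, $\achievesMaturity{\gamma}{w,K}{n}$ means that there is some $n_\gamma \in \Naturals$ with $\isMature{\gamma}{w,K}{m}$ for every $m \geq n_\gamma$, and likewise $\achievesMaturity{\delta}{w,K}{n}$ yields some $n_\delta \in \Naturals$ with $\isMature{\delta}{w,K}{m}$ for every $m \geq n_\delta$. (Equivalently, one may first invoke the Remark on conjunctions in the limit to pass from the two hypotheses to $\trueInLimit{\left(\isMature{\gamma}{w,K}{n} \andalso \isMature{\delta}{w,K}{n}\right)}{n}$ in a single step.)

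Next I would put $N = \max\{n_\gamma, n_\delta\}$, so that for every $n \geq N$ both $\isMature{\gamma}{w,K}{n}$ and $\isMature{\delta}{w,K}{n}$ hold. For each such $n$ I would apply the Theorem, instantiating its two possible worlds both to $w$ and its two development stages both to $n$; this gives $\representable{\gamma}{\delta}{n}{n}{w}{w}$ or $\representable{\delta}{\gamma}{n}{n}{w}{w}$. Thus the disjunction $\representable{\gamma}{\delta}{n}{n}{w}{w} \orelse \representable{\delta}{\gamma}{n}{n}{w}{w}$ is true for all $n \geq N$, which is precisely the assertion $\trueInLimit{\left(\representable{\gamma}{\delta}{n}{n}{w}{w} \orelse \representable{\delta}{\gamma}{n}{n}{w}{w}\right)}{n}$. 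Since $w$ was arbitrary in $K$, the conclusion follows for every $w \in K$.

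I do not anticipate a real obstacle, since this is a routine passage to the limit; the one point worth stating carefully is that the conclusion must not be strengthened. The disjunct supplied by the Theorem at a given stage $n$ may depend on $n$ \textemdash{} it could be $\gamma$ whose output is translatable into $\delta$'s at some stages and the reverse at others \textemdash{} so one cannot in general commute the limit with the disjunction to obtain $\trueInLimit{\representable{\gamma}{\delta}{n}{n}{w}{w}}{n} \orelse \trueInLimit{\representable{\delta}{\gamma}{n}{n}{w}{w}}{n}$; only the disjunction as a whole is guaranteed to stabilise, and that is exactly what the Corollary claims.
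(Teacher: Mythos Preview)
Your proposal is correct and matches the paper's approach: the paper does not spell out a proof at all, merely remarking that the Corollary is obtained ``by taking the conditions of the theorem in the limit,'' and your argument is precisely the routine unfolding of that remark. Your added caution about not commuting the limit with the disjunction is accurate and goes slightly beyond what the paper says, but it is a welcome clarification rather than a departure.
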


\section{Conclusion}\label{section:conclusion}

Similar to~\cite{arageorgis_relativism_2017}, the central elements of the framework on its technical level (corresponding to the central elements of specifications and software developers on the conceptual level) are \emph{possible worlds} and \emph{program generators} and we are studying their interaction. At each time instance, a program generator processes the evidence up to the given time and produces a program, a hypothesis, and an action. The possible world responds with a truth assignment, program semantics, and some new evidence; the new evidence can be appended to the existing and be fed into the program generator at the next time instance, continuing the interaction.

Using these elements above, we have defined an abstract notion of translatability between the outputs of programmers working in different settings. As a token of the importance of such a notion, we have shown, similarly to~\cite{arageorgis_relativism_2017}, that two program generators starting with common specification (\ie, set of possible worlds) and writing correct programs relative to that specification will end up in programs one of which can be (trivially) translated into the other. Specific instantiations of the framework can refine the definition of translatability we have provided and thus, by adapting our proof accordingly, arrive at translations with more specific structure.

The fact that we could adapt a framework targetted at describing natural science and end up with a framework that describes software development and still highly resembles its origin hints on the similarities between the two endeavours. However, notice that, in addition to introducing new elements to account for the technical artefacts that are produced by programmers, we had to reinterpret some terms of the original framework in order to adapt it for our purposes, most notably \emph{background knowledge} was reinterpreted as \emph{specification}; this is a hint of (one of) the key differences between the two endeavours.

Of course, as already stated in Section~\ref{section:philosophy-of-cs}, our framework does not handle all the aspects of software development, let alone of the development of full computational systems. We have commented on a few of the ways that it might be extended to accommodate more such aspects. In addition, the reasoning on logics of program generators might benefit if it is described via the theory of institutions, an abstraction of model theory based on category theory; \cite{dimarogkona-thesis} has already attempted such a modification of the original framework of~\cite{arageorgis_relativism_2017}. Moreover, other kinds of mappings between the outputs of program generators could be considered instead of translations, such as conceptual blending~\cite{goguen_introduction_1999}; this might model the creative process of merging interesting ideas from one program to the other while the programs are still expressed in different formalisms.

\bibliographystyle{alphaurl}
\bibliography{paper.bib}

\begin{thebibliography}{DMP18}

\bibitem[AB89]{arageorgis_demarcating_1989}
Aristidis Arageorgis and Aristides Baltas.
\newblock Demarcating technology from science: {Problems} and problem solving
  in technology.
\newblock {\em Journal for General Philosophy of Science}, 20(2):212--229,
  September 1989.
\newblock \href {https://doi.org/10.1007/BF01801475}
  {\path{doi:10.1007/BF01801475}}.

\bibitem[Ang13]{angius_abstraction_2013}
Nicola Angius.
\newblock Abstraction and {Idealization} in the {Formal} {Verification} of
  {Software} {Systems}.
\newblock {\em Minds and Machines}, 23(2):211--226, May 2013.
\newblock \href {https://doi.org/10.1007/s11023-012-9289-8}
  {\path{doi:10.1007/s11023-012-9289-8}}.

\bibitem[Ang14]{angius_problem_2014}
Nicola Angius.
\newblock The {Problem} of {Justification} of {Empirical} {Hypotheses} in
  {Software} {Testing}.
\newblock {\em Philosophy \& Technology}, 27(3):423--439, September 2014.
\newblock \href {https://doi.org/10.1007/s13347-014-0159-6}
  {\path{doi:10.1007/s13347-014-0159-6}}.

\bibitem[Ang20]{angius_mutual_2020}
Nicola Angius.
\newblock On the {Mutual} {Dependence} {Between} {Formal} {Methods} and
  {Empirical} {Testing} in {Program} {Verification}.
\newblock {\em Philosophy \& Technology}, 33(2):349--355, June 2020.
\newblock \href {https://doi.org/10.1007/s13347-019-00364-9}
  {\path{doi:10.1007/s13347-019-00364-9}}.

\bibitem[AP18]{angius_logic_2018}
Nicola Angius and Giuseppe Primiero.
\newblock The logic of identity and copy for computational artefacts.
\newblock {\em Journal of Logic and Computation}, 28(6):1293--1322, September
  2018.
\newblock \href {https://doi.org/10.1093/logcom/exy012}
  {\path{doi:10.1093/logcom/exy012}}.

\bibitem[AP22]{angius_copying_2022}
Nicola Angius and Giuseppe Primiero.
\newblock Copying safety and liveness properties of computational artefacts.
\newblock {\em Journal of Logic and Computation}, August 2022.
\newblock \href {https://doi.org/10.1093/logcom/exac053}
  {\path{doi:10.1093/logcom/exac053}}.

\bibitem[APT21]{plato_philosophy_2021}
Nicola Angius, Giuseppe Primiero, and Raymond Turner.
\newblock The {Philosophy} of {Computer} {Science}.
\newblock In Edward~N. Zalta, editor, {\em The {Stanford} {Encyclopedia} of
  {Philosophy}}. Metaphysics Research Lab, Stanford University, spring 2021
  edition, 2021.
\newblock URL:
  \url{https://plato.stanford.edu/archives/spr2021/entries/computer-science/}.

\bibitem[Ara17]{arageorgis_relativism_2017}
Aristidis Arageorgis.
\newblock Relativism, translation, and the metaphysics of realism.
\newblock {\em Philosophical Studies}, 174(3):659--680, March 2017.
\newblock \href {https://doi.org/10.1007/s11098-016-0702-7}
  {\path{doi:10.1007/s11098-016-0702-7}}.

\bibitem[CS14]{git}
Scott Chacon and Ben Straub.
\newblock {\em Pro git}.
\newblock Apress, 2014.

\bibitem[Dim]{dimarogkona-thesis}
Maria~M. Dimarogkona.
\newblock {PhD} thesis, School of Applied Mathematical and Physical Science,
  National Technical University of Athens.
\newblock To appear.

\bibitem[DMP18]{de_mol_what_2018}
Liesbeth De~Mol and Giuseppe Primiero.
\newblock What is a program? {Historical} and {Philosophical} perspectives.
\newblock {\em The Reasoner}, 12(2):11--12, February 2018.

\bibitem[Fit06]{fitting_bilattices_2006}
Melvin Fitting.
\newblock Bilattices {Are} {Nice} {Things}.
\newblock In T.~Bolander, V.~Hendricks, and S.~A. Pedersen, editors, {\em
  Self-{Reference}}. CSLI Publications, 2006.

\bibitem[Flo16]{floridi_method_2016}
Luciano Floridi.
\newblock The method of abstraction.
\newblock In Luciano Floridi, editor, {\em The {Routledge} handbook of
  philosophy of information}, Routledge handbooks in philosophy, pages 50--56.
  Routledge, Taylor \& Francis Group, London New York, 2016.

\bibitem[Gog99]{goguen_introduction_1999}
Joseph Goguen.
\newblock An {Introduction} to {Algebraic} {Semiotics}, with {Application} to
  {User} {Interface} {Design}.
\newblock In Chrystopher~L. Nehaniv, editor, {\em Computation for {Metaphors},
  {Analogy}, and {Agents}}, Lecture {Notes} in {Computer} {Science}, pages
  242--291, Berlin, Heidelberg, 1999. Springer.
\newblock \href {https://doi.org/10.1007/3-540-48834-0_15}
  {\path{doi:10.1007/3-540-48834-0_15}}.

\bibitem[Gru11]{gruner_problems_2011}
Stefan Gruner.
\newblock Problems for a {Philosophy} of {Software} {Engineering}.
\newblock {\em Minds and Machines}, 21(2):275--299, May 2011.
\newblock \href {https://doi.org/10.1007/s11023-011-9234-2}
  {\path{doi:10.1007/s11023-011-9234-2}}.

\bibitem[KJG92]{kelly_reliability_1992}
Kevin~T. Kelly, Cory Juhl, and Clark Glymour.
\newblock Reliability, {Realism}, and {Relativism}.
\newblock 1992.

\bibitem[PRC21]{primiero_theory_2021}
Giuseppe Primiero, Franco Raimondi, and Taolue Chen.
\newblock A theory of change for prioritised resilient and evolvable software
  systems.
\newblock {\em Synthese}, 198(23):5719--5744, October 2021.
\newblock \href {https://doi.org/10.1007/s11229-019-02305-7}
  {\path{doi:10.1007/s11229-019-02305-7}}.

\bibitem[Pri16]{primiero_information_2016}
Giuseppe Primiero.
\newblock Information in the {Philosophy} of {Computer} {Science}.
\newblock In Luciano Floridi, editor, {\em The {Routledge} handbook of
  philosophy of information}, Routledge handbooks in philosophy, pages 90--106.
  Routledge, Taylor \& Francis Group, London New York, June 2016.

\bibitem[Tur11]{turner_specification_2011}
Raymond Turner.
\newblock Specification.
\newblock {\em Minds and Machines}, 21(2):135--152, May 2011.
\newblock \href {https://doi.org/10.1007/s11023-011-9239-x}
  {\path{doi:10.1007/s11023-011-9239-x}}.

\bibitem[Tur18]{turner_computational_2018}
Raymond Turner.
\newblock {\em Computational {Artifacts}: {Towards} a {Philosophy} of
  {Computer} {Science}}.
\newblock Springer, Berlin, Heidelberg, 1st edition, 2018.
\newblock \href {https://doi.org/10.1007/978-3-662-55565-1}
  {\path{doi:10.1007/978-3-662-55565-1}}.

\bibitem[Tur20]{turner_computational_2020}
Raymond Turner.
\newblock Computational {Intention}.
\newblock {\em Studies in Logic, Grammar and Rhetoric}, 63(1):19--30, September
  2020.
\newblock \href {https://doi.org/10.2478/slgr-2020-0025}
  {\path{doi:10.2478/slgr-2020-0025}}.

\bibitem[ZJ97]{zave_four_1997}
Pamela Zave and Michael Jackson.
\newblock Four dark corners of requirements engineering.
\newblock {\em ACM Transactions on Software Engineering and Methodology},
  6(1):1--30, January 1997.
\newblock \href {https://doi.org/10.1145/237432.237434}
  {\path{doi:10.1145/237432.237434}}.

\end{thebibliography}
\end{document}